\newtheorem{theorem}{Theorem}
\begin{document}

\title{Modulo Sampling in Shift-Invariant Spaces: Recovery and Stability Enhancement}

\author{Yhonatan Kvich, ~\IEEEmembership{Graduate Student Member,~IEEE,} Yonina C. Eldar,~\IEEEmembership{Fellow,~IEEE}
\thanks{This research was supported by the Tom and Mary Beck Center for Renewable Energy as part of the Institute for Environmental Sustainability (IES) at the Weizmann Institute of Science, by the European Research Council (ERC) under the European Union’s Horizon 2020 research and innovation program (grant No. 101000967) and by the Israel Science Foundation (grant No. 536/22).}}
        % <-this % stops a space
% \thanks{This paper was produced by the IEEE Publication Technology Group. They are in Piscataway, NJ.}% <-this % stops a space
%\thanks{Manuscript received April 19, 2021; revised August 16, 2021.}}

% The paper headers
%\markboth{Journal of \LaTeX\ Class Files,~Vol.~14, No.~8, August~2021}%
%{Shell \MakeLowercase{\textit{et al.}}: A Sample Article Using %IEEEtran.cls for IEEE Journals}

% \IEEEpubid{0000--0000/00\$00.00~\copyright~2021 IEEE}
% Remember, if you use this you must call \IEEEpubidadjcol in the second
% column for its text to clear the IEEEpubid mark.

\maketitle

\begin{abstract}
Sampling shift-invariant (SI) signals with a high dynamic range poses a notable challenge in the domain of analog-to-digital conversion (ADC). It is essential for the ADC's dynamic range to exceed that of the incoming analog signal to ensure no vital information is lost during the conversion process. Modulo sampling, an approach initially explored with bandlimited (BL) signals, offers a promising solution to overcome the constraints of dynamic range. In this paper, we expand on the recent advancements in modulo sampling to encompass a broader range of SI signals. Our proposed strategy incorporates analog preprocessing, including the use of a mixer and a low-pass filter (LPF), to transform the signal into a bandlimited one. This BL signal can be accurately reconstructed from its modulo samples if sampled at slightly above its Nyquist frequency. The recovery of the original SI signal from this BL representation is then achieved through suitable filtering. We also examine the efficacy of this system across various noise conditions. Careful choice of the mixer plays a pivotal role in enhancing the method's reliability, especially with generators prone to instability. Our approach thus broadens the  framework of modulo sampling's utility in efficiently recovering SI signals, pushing its boundaries beyond BL signals while sampling only slightly above the rate needed for a SI signal.
\end{abstract}

\begin{IEEEkeywords}
Modulo sampling, dynamic range, unlimited sampling, $B^2 R^2$.
\end{IEEEkeywords}

\section{Introduction}
\label{sec:intro}

Analog-to-digital converters (ADCs) transform analog signals into a digital format for processing in digital signal processing systems. The cost and power requirements of ADCs escalate with an increase in the sampling rate, making it preferable to operate at the minimum necessary rate for efficient sampling, as highlighted in previous studies \cite{eldar2015sampling,mishali2011sub}. The Shannon-Nyquist sampling theorem is commonly employed in this context, asserting that bandlimited (BL) signals can be accurately represented by their samples at a rate at least twice the maximum frequency present in the signal. As the sampling rate increases, so do the cost and power consumption associated with the analog-to-digital conversion process. Therefore, sampling as close as possible to the signal's Nyquist rate is beneficial.
Another important factor is the dynamic range of ADCs. To avoid signal clipping and the consequent loss of information, as illustrated in Fig. \ref{fig:compare}(a), the dynamic range of an ADC must surpass that of the input analog signal.

\begin{figure}[htb]
	\begin{minipage}[b]{\linewidth}
		\centering
		\centerline{\includegraphics[width=\columnwidth]{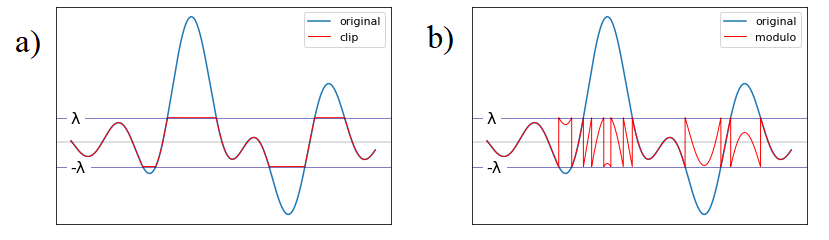}}
	\end{minipage}
	\caption{(a) BL signal and its clipped version. (b) BL signal and output of its modulo operation.}
	\label{fig:compare}
\end{figure}

Numerous approaches have been developed to address clipping or to enhance the dynamic range, with classification based on the presence or absence of preprocessing prior to sampling. Techniques that forgo preprocessing exploit the correlation inherent in BL signals' samples taken above the Nyquist rate, using this correlation to reconstruct any missing information due to clipping by oversampling \cite{marks1983restoring, marks1984error}. Alternatively, leveraging the spectral vacancies inherent in multiband communication systems can distinguish between original and clipped signals \cite{abel1991restoring, rietman2008clip}. These methods, however, either necessitate significant oversampling \cite{marks1983restoring, marks1984error} or presuppose knowledge of the spectral gaps \cite{abel1991restoring, rietman2008clip}, without offering theoretical assurances for their efficacy. Clipping can also be circumvented through attenuation, though this approach risks diminishing low-amplitude signals beneath the noise floor. Variable gain attenuators, like automatic gain controls (AGCs) and companders, adjust to preserve signal integrity without disproportionately affecting signals of smaller amplitude. AGCs utilize a series of amplifiers regulated by feedback to maintain a consistent output level \cite{perez2011automatic, mercy1981review}, while companding adjusts gain inversely proportional to the signal amplitude. Companding, however, like clipping, introduces nonlinear distortion and expands the signal's bandwidth. Beurling's theorem and subsequent methodologies by Landau et al. have facilitated the recovery of BL signals from their companded versions by ensuring the compander's output for finite energy inputs remains finite and the compander function is monotone and differentiable across the input signal's dynamic range \cite{landau1960recovery, landau1961recovery}. Despite these advancements, the implementation of companding-based solutions at minimal sampling rates faces challenges due to requirements for monotonicity, differentiability, and finite energy output \cite{landau1960recovery, landau1961recovery}.

An alternate strategy involves applying a modulo operation to the input signal before sampling to constrain its dynamic range, a technique that has found application in high-dynamic-range ADCs, or self-reset ADCs, in imaging contexts \cite{park2007wide, sasagawa2015implantable, yuan2009activity, krishna2019unlimited}. This approach, alongside storing modulo signal samples, often involves capturing additional data, such as the extent of folding for each sample or the folding direction, complicating the sampling circuitry while simplifying signal reconstruction from folded samples. See Fig. \ref{fig:compare}(b) for a visual representation.

The concept of unlimited sampling, introduced by Bhandari et al., relies solely on folded or modulo samples for signal recovery, demonstrating that sampling above the Nyquist rate enables unique identification of BL signals from their modulo samples \cite{bhandari2020unlimited}. This method, which extends Itoh’s unwrapping algorithm, shows that a sufficient oversampling rate allows for the computation of the original signal's higher-order differences from modulo samples, facilitating signal reconstruction through summation of these differences. However, this technique's effectiveness diminishes in noisy environments, necessitating a significantly higher oversampling rate for reliable recovery \cite{bhandari2020unlimited}. Subsequent improvements by Romanov and Ordentlich and others have sought to reduce the required sampling rate, apply the technique to various signal models, and explore hardware implementations for high-dynamic-range ADCs using modulo operations \cite{romanov2019above, lu2020high,bhandari2021unlimited, rudresh2018wavelet, bhandari2018unlimited, musa2018generalized, prasanna2020identifiability, ji2019folded, Bhandari_Krahmer_2020, Guo_Bhandari_2023}. Despite these advancements, challenges remain in terms of missing theoretical guarantees, stability concerns, the need for smooth and monotone operators, and reliance on higher-than-Nyquist sampling rates.

Azar et al. \cite{azar2022robust, Azar_Mulleti_Eldar_2022a} introduced a recovery algorithm capable of reconstructing BL signals from their modulo samples at a rate slightly above the Nyquist rate, even under various noise conditions. This approach was further refined in \cite{Shah_Mulleti_Eldar_2023} through the incorporation of a sparsity assumption to enhance noise robustness. Additionally, Mulleti et al. \cite{mulleti2022modulo} explored the application of modulo sampling to finite-rate-of-innovation (FRI) signals. The methodologies developed for both BL and FRI signals have been successfully implemented in hardware, as detailed in \cite{mulleti2023hardware}.

% Modulo sampling has been rigorously explored for BL signals, evidenced by extensive research \cite{bhandari2020unlimited,romanov2019above,bhandari2021unlimited,Bhandari_Krahmer_2020, azar2022robust, Azar_Mulleti_Eldar_2022a, Shah_Mulleti_Eldar_2023, Guo_Bhandari_2023, mulleti2023hardware} .

While the assumption of bandlimitation often serves as a practical approximation, numerous signals exhibit more precise representations through alternative bases or possess distinct structures within the Fourier domain
\cite{unser2000sampling,eldar2009compressed, bhandari2018unlimited,rudresh2018wavelet,prasanna2020identifiability,ji2022unlimited,mulleti2022modulo}.
Shift-invariant (SI) spaces, in particular, hold a crucial role in the theory of sampling. Signals within these spaces are expressed as linear combinations of shifts from a collection of generating functions \cite{eldar2015sampling, eldar2009compressed,deboor1994structure,christensen2004oblique,aldroubi2001nonuniform, bhandari2011shift}. Any such signal can be accurately reconstructed in an SI space, formed by shifting $k$ functions with period $T$, using $k$ distinct sampling sequences. These samples are derived by passing the signal through a filter bank consisting of $k$ filters and sampling the output uniformly at intervals of $nT$. This technique allows for a sampling rate of $k/T$, potentially much lower than the Nyquist rate for signals with wideband generators, given certain conditions \cite{eldar2009compressed}.

A unified theory for modulo sampling within SI spaces remains undeveloped. Bhandari and Krahmer \cite{Bhandari_Krahmer_2020} considered 2D B-splines, theirs methodology necessitates an oversampling rate determined by the modulo operation's requirements and the signal's amplitude limits, and lacks noise robustness.

This study introduces an innovative recovery approach for SI signals utilizing modulo sampling, which extends upon the methodology presented in \cite{kvich2024Modulo}. To enhance stability, we integrate a mixer into the signal processing framework. This integration is designed to exploit the full spectral range of the signal generator, significantly bolstering the stability of the recovery process. By initially applying a low-pass filter (LPF) to transform the input signal into a BL form suitable for modulo recovery, as described in prior work \cite{azar2022robust, Azar_Mulleti_Eldar_2022a}, and subsequently leveraging the characteristics of the SI space for signal reconstruction, our approach requires only a slightly higher sampling rate than that necessary for SI signals. The key improvement facilitated by the mixer not only enables the efficient utilization of the entire spectrum of the generator but also effectively addresses and mitigates potential stability concerns, resulting in a more resilient and efficient signal recovery mechanism.

We perform simulations with Lorentzian generators and achieve successful signal recovery amidst noise interference. Our results reveal enhanced noise robustness through the integration of the mixer. We also looked at scaled B-spline generators, which typically present difficulties in SI recovery due to inherent instability. Our observations confirm that the addition of a mixer substantially improves both stability and the accuracy of recovery in noisy environments, highlighting the effectiveness of our method in enhancing signal reconstruction for various generator models. Following this, we apply modulo sampling to Electrocardiogram (ECG) signals and successfully achieve recovery, demonstrating the applicability of our method to biological signals.
	
The structure of this paper is outlined as follows: Section \ref{sec:background} introduces the foundational concepts of modulo sampling, SI spaces, and the formulation of the problem. Section \ref{sec:system} details the design of the recovery system, including a theoretical framework for signal reconstruction. In Section \ref{sec:sim}, we illustrate the system's performance through simulations, particularly focusing on its behavior in noisy conditions. Section \ref{sec:ecg} explores the application of SI modulo sampling in the context of ECG signal analysis. Finally, Section \ref{sec:conclusion} offers concluding remarks.

Throughout this paper, the following notations are used. The space of sequences with finite norm is represented as $\ell^2$, and its corresponding norm is denoted $\|\cdot\|_2$. For a given sequence $a[n]$, its Discrete-Time Fourier Transform (DTFT) is $A(e^{j\omega}) := \sum_{n\in\mathbb{Z}} a[n] e^{-j\omega n}$. Similarly, for a function $x(t)$ with finite norm, its Continuous-Time Fourier Transform (CTFT) is defined as $X(\omega) := \int_{t\in\mathbb{R}} x(t) e^{-j\omega t}$,  we also employ the notation \(\mathcal{F}\{x\}(\omega)\). Denote $\ast$ as convolution.

% Signals bounded within $[-\Omega,\Omega]$ in the frequency domain are referred to as $\Omega$-BL.

\section{Background and Problem Statement}
\label{sec:background}

\subsection{Preliminaries}

The dynamic range of ADCs presents a significant challenge in accurately capturing signals. An issue that often arises is clipping, which occurs when the signal values exceed the ADC's dynamic range, resulting in a loss of information. One approach to circumvent this problem is to expand the dynamic range, but this leads to increased quantization errors and necessitates the use of higher-resolution ADCs, which in turn consume more power. An alternative proposed solution is the application of a nonlinear operator to the analog signal before sampling. This paper primarily focuses on the modulo operator as the non-linear operator of interest, diverging from the alternatives explored in \cite{azar2022robust, Azar_Mulleti_Eldar_2022a}. The modulo operator works by reducing the dynamic range of a signal, ``folding" it within a predetermined interval. For a given $\lambda>0$, the operator maps real values to the interval $[-\lambda, \lambda]$ as follows:
\begin{equation}
	\mathcal{M}_\lambda x := ((x+\lambda) \mod 2\lambda) - \lambda .
\end{equation}

Azar et al. \cite{azar2022robust, Azar_Mulleti_Eldar_2022a} have shown that by sampling slightly above the Nyquist rate, it is feasible to ``unfold" the samples of a BL signal. They also developed an algorithm for signal recovery in such contexts, called $B^2 R^2$.
Fig \ref{fig:azar} illustrates the overall framework of the modulo recovery process for BL signals. An input BL signal is first filtered through $s(t)$, processed by the analog modulo operator.
Finally, the signal is then sampled at a rate $T_s < T$. Here, $T$ denotes the Nyquist rate, which is defined as twice the highest frequency found within the bandwidth of $x(t)$.
Various other methods for recovering BL signals have been proposed, such as the one in \cite{Shah_Mulleti_Eldar_2023}, which utilizes sparsity assumptions and fast-ISTA (FISTA) to improve the method in \cite{azar2022robust, Azar_Mulleti_Eldar_2022a}. Techniques in \cite{bhandari2020unlimited,romanov2019above} and \cite{Guo_Bhandari_2023} have introduced methodologies that use higher-order differences and iterative signal sieving, respectively, for signal reconstruction. However these approaches require higher-than-Nyquist sampling rates and are less robust to noise.

\begin{figure}[htb]
	\begin{minipage}[b]{\linewidth}
		\centering
		\centerline{\includegraphics[width=\columnwidth]{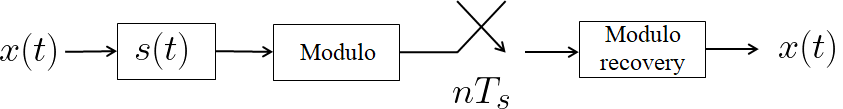}}
	\end{minipage}
	\caption{The modulo recovery system for BL signals, as introduced in \cite{azar2022robust, Azar_Mulleti_Eldar_2022a, Shah_Mulleti_Eldar_2023}.}
	\label{fig:azar}
\end{figure}

This study aims to expand the findings of \cite{azar2022robust, Azar_Mulleti_Eldar_2022a, Shah_Mulleti_Eldar_2023} to the domain of SI signals \cite{eldar2015sampling}. A SI space with a single generator is defined as the span of a generator signal $h(t)$ with time-shifts of $T$, meaning
\begin{equation}
	\mathcal{H} := \Big\{ \sum_{n\in\mathbb{Z}}{a[n]h(t-n T)} | a\in\ell^2 \Big\} .	\label{eq:si}
\end{equation}
Choosing $h(t)$ as the sinc function with a width of $\frac{1}{2T}$, the space in (\ref{eq:si}) comprises $2\Omega$-BL signals, where $\Omega=\frac{1}{2T}$.

It is a well-established fact \cite{eldar2015sampling} that signals in this space can be successfully recovered from samples taken at a rate of $\frac{1}{T}$, independent of the Nyquist rate of the generator or whether it is BL. The recovery algorithm involves frequency domain division using a correction filter, the design of which is contingent upon the properties of the SI space, the sampling rate, and the analog filter applied before sampling.

The domain of SI spaces offers increased flexibility in terms of signal sampling, leading to a broader range of potential applications. The objective here is to initiate a comprehensive study on modulo sampling of SI signals, aiming for robust recovery while sampling at a rate close to $\frac{1}{T}$.

\subsection{Problem Formulation and Recovery Approach}

Our goal is to reconstruct a SI signal
\begin{equation}
\label{eq:si_signal}
	x(t)=\sum_{n\in\mathbb{Z}}{a[n]h(t-n T)}\in\mathcal{H}
\end{equation}
from its modulo samples $\{\mathcal{M}_\lambda x [n T_s]\}$, where $f_s=\frac{1}{T_s}$ is the sampling frequency, at least \(\frac{1}{T}\), nearing its value. Notably, \(T\) can be significantly lower than the Nyquist rate of both \(h(t)\) and \(x(t)\). To facilitate this, we first apply analog pre-processing before implementing the modulo operator. Our objective is to reconstruct the input signal from the modulo samples $\{\mathcal{M}_\lambda f(x) [n T_s]\}$, with $f(\cdot)$ denoting the necessary analog domain pre-processing. These pre-processing steps are critical for ensuring unique recovery and augmenting the efficiency of the modulo sampling method. We will demonstrate that modulo signal recovery is achievable for \(T_s < T\), for a broad class of generators $h(t)$.

%%%%%%%%%%%%%%%%%%%%%%%%%%%%%%%%%%%%%%

\section{Recovery System and Theoretical Guarantees}
\label{sec:system}

\subsection{Recovery of SI Signal}

In this section, we introduce the recovery system, its algorithm, and a theorem that validates the uniqueness of recovery in SI space, as defined in equation (\ref{eq:si_signal}). We explain the system mechanics and mathematical principles, followed by a discussion on method stability, highlighting its effectiveness and areas for potential improvement.

\begin{figure}[htb]
	\begin{minipage}[b]{\linewidth}
		\centering
		\centerline{\includegraphics[width=\columnwidth]{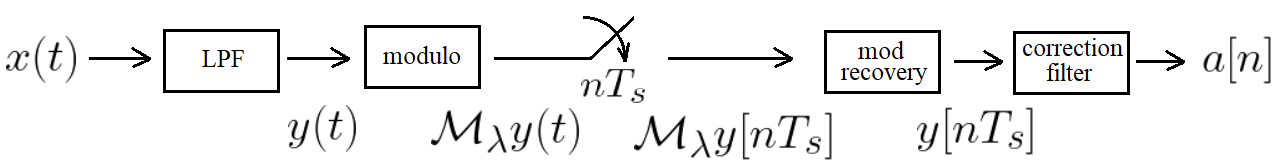}}
	\end{minipage}
	\caption{Block diagram for modulo sampling for SI spaces as detailed in \cite{kvich2024Modulo}. See theorem \ref{thm:recovery_icassp} for more details on the system and proof of recovery.}
	\label{fig:kvich_icassp}
\end{figure}

\begin{theorem}
	\label{thm:recovery_icassp}
	% Let $\mathcal{H}$ represent a SI space as defined in (\ref{eq:si}). Denote the input signal $x(t)\in\mathcal{H}$. Let LPF with a cutoff frequency of $\frac{\pi}{T}$ is applied, resulting in $y(t):= \text{LPF}(x(t))$. The modulo operation is then applied, producing $\mathcal{M}_\lambda y(t)$. We sample at a rate $T_s < T$, obtaining $\mathcal{M}_\lambda y[n T_s]$.
	% The input signal $x(t)$ can be perfectly recovered from the samples, provided that $H(\omega)\neq 0$ almost surely across the frequency interval $[-\frac{\pi}{T}, \frac{\pi}{T}]$.
	
	Let \(\mathcal{H}\) be a SI space as in (\ref{eq:si}) and let \(x(t) \in \mathcal{H}\) be the input signal. LPF with a cutoff frequency of \(\frac{\pi}{T}\) is applied to yield \(y(t) = \text{LPF}(x(t))\). Modulo operation is applied to \(y(t)\), resulting in \(\mathcal{M}_\lambda y(t)\). The signal \(\mathcal{M}_\lambda y(t)\) is then sampled at rate \(T_s < T\), producing the samples \(\mathcal{M}_\lambda y[n T_s]\).
	
	The input signal \(x(t)\) can be perfectly recovered from the samples \(\mathcal{M}_\lambda y[n T_s]\), provided that \(H(\omega)\) is non-zero almost everywhere within the frequency interval \([-\frac{\pi}{T}, \frac{\pi}{T}]\).
\end{theorem}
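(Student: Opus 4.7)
The plan is to chain two known results: first, the $B^2 R^2$ recovery guarantee for bandlimited signals from modulo samples taken slightly above Nyquist, and second, the standard SI frequency-domain inversion using a correction filter. Everything after the LPF is a bandlimited signal, and once that signal is recovered we pull the SI coefficients out by pointwise division in the Fourier domain.

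First I would verify that $y(t)=\mathrm{LPF}(x(t))$ is bandlimited to $[-\pi/T,\pi/T]$, so that its Nyquist sampling interval is exactly $T$. The hypothesis $T_s<T$ therefore places the samples strictly above the Nyquist rate of $y$, which is the precise regime covered by the $B^2R^2$ theorem of Azar et al.\ \cite{azar2022robust,Azar_Mulleti_Eldar_2022a}. Invoking that result on the modulo samples $\mathcal{M}_\lambda y[nT_s]$ yields the bandlimited waveform $y(t)$ (equivalently $Y(\omega)$) exactly.

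Second, I would relate $Y(\omega)$ back to the unknown coefficient sequence $a[n]$. Because $x(t)=\sum_n a[n]h(t-nT)$, its CTFT factors as $X(\omega)=A(e^{j\omega T})H(\omega)$, and since the LPF is the indicator of $[-\pi/T,\pi/T]$, we get
\begin{equation}
Y(\omega)=A(e^{j\omega T})H(\omega)\,\mathbf{1}_{[-\pi/T,\pi/T]}(\omega).
\end{equation}
The interval $[-\pi/T,\pi/T]$ is exactly one period of the $2\pi/T$-periodic DTFT $A(e^{j\omega T})$, so one entire period of $A$ is encoded in $Y$. Under the hypothesis that $H(\omega)\neq 0$ almost everywhere on this interval, we can form
\begin{equation}
A(e^{j\omega T}) = \frac{Y(\omega)}{H(\omega)}, \quad \omega\in[-\pi/T,\pi/T],
\end{equation}
which determines $A$ everywhere by periodic extension, hence determines $\{a[n]\}$ by inverse DTFT. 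Resynthesizing $x(t)=\sum_n a[n]h(t-nT)$ completes the recovery.

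The only real obstacle is the division step: although the non-vanishing assumption on $H(\omega)$ rules out actual zeros almost everywhere, the map $Y\mapsto Y/H$ is unbounded whenever $H$ is small on a set of positive measure, so the argument gives uniqueness and a formal inversion but no stability control. I would flag this as the motivation for the mixer-based refinement discussed later in the paper, rather than attempting a quantitative noise analysis within the proof itself. The two cited ingredients ($B^2R^2$ recovery and SI correction filtering) can each be invoked as black boxes, so the proof reduces to writing out the factorization above and checking that the Nyquist condition for $y$ matches the theorem's hypothesis $T_s<T$.
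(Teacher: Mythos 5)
Your proposal matches the paper's own proof essentially step for step: recover the bandlimited $y(t)$ via the $B^2R^2$ result of Azar et al., use the factorization $X(\omega)=A(e^{j\omega T})H(\omega)$ restricted to the passband, and divide by $H(\omega)$ to extract one full period of $A(e^{j\omega T})$ and hence the coefficients. Your added remarks — that $[-\pi/T,\pi/T]$ is exactly one period of the DTFT and that the division step gives uniqueness but no stability — are correct and consistent with the paper's post-proof discussion motivating the mixer.
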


\begin{proof}
	
	We can see that $y(t)$ is BL with Nyquist rate of $T$ and sampled at rate \(T_s < T\). Following the approach in \cite{azar2022robust, Azar_Mulleti_Eldar_2022a}, this allows for the reconstruction of the signal $y(t)$, meaning we obtain $Y(\omega)$ for $\omega$ within $[-\frac{\pi}{T}, \frac{\pi}{T}]$.
	
	Since $y(t)=\text{LPF}(x(t))$, we know that $Y(\omega) = X(\omega)$ for $\omega\in[-\frac{\pi}{T}, \frac{\pi}{T}]$, from the definition of LPF.
	Following equation (\ref{eq:si_signal}), we can apply a Fourier transform and using the properties of DTFT and CTFT as shown in \cite{eldar2015sampling}, obtaining $X(\omega) = A(e^{j\omega T}) H(\omega)$. So we conclude that $Y(\omega) = A(e^{j\omega T}) H(\omega)$ for $\omega\in[-\frac{\pi}{T}, \frac{\pi}{T}]$.

	Given the theorem's assumption, $A(e^{j\omega T})$ can be computed as $\frac{Y(\omega)}{H(\omega)}$ for $\omega\in[-\frac{\pi}{T}, \frac{\pi}{T}]$.
	A discrete-time filter with a frequency response of $\frac{1}{H(\omega)}$ within the frequency range $|\omega|\le \frac{\pi}{T}$ is applied.
	This filter acts as a correction filter, which ultimately leads to the extraction of $a[n]$ and thereby the reconstruction of the original input signal $x(t)$ using (\ref{eq:si_signal}).
\end{proof}

The crucial assumption in theorem \ref{thm:recovery_icassp} is that \(H(\omega) \ne 0\) almost everywhere within this frequency range, a condition met by most generators used in practice. If $H(\omega) = 0$ for any $\omega$ within the range $[-\frac{\pi}{T}, \frac{\pi}{T}]$, then the correction filter will enhance noise at those frequencies.

%%%%%%%%%%%%%%%%%%%%%%%%%%%%%%%%%%

To demonstrate stability within our recovery system, we explore the example of a stable generator, specifically B-spline and Lorentzian.
Define $\beta^{(0)}(t) = \mathbf{1}_{[-0.5, 0.5]}(t)$ as the $0$-th order B-spline \cite{eldar2015sampling}. The $n$-th order B-spline is recursively defined as $\beta^{(n+1)} = \beta^{(n)} \ast \beta^{(0)}$. The CTFT of an $n$-th order spline is $\text{sinc}^{n+1} (\omega)$. With no roots within $[-\frac{\pi}{T}, \frac{\pi}{T}]$, the correction filter becomes stable, with frequency response of $\frac{1}{\text{sinc}^{n+1} (\omega)}$ for $\omega \in [-\frac{\pi}{T}, \frac{\pi}{T}]$.

A second example for stable generator is Lorentzian function, defined as

\begin{equation}
	\label{eq:lorz}
	h_\gamma(t)= \frac{1}{\pi \gamma \Big( 1+ (\frac{x}{\gamma})^2 \Big)}
\end{equation}
where $\gamma>0$ is a scale parameter. The CTFT is

\begin{equation}
	\label{eq:lorz_ft}
	H_\gamma(\omega)=\exp(-\gamma |\omega|).
\end{equation}

The correction filter have an impulse response given by $\frac{1}{H_\gamma(\omega)} = \exp(\gamma |\omega|)$ for $\omega \in [-\frac{\pi}{T}, \frac{\pi}{T}]$, thereby robust to noise. Fig. \ref{fig:generators} displays the Fourier transform of Lorentzian generators with $\gamma=0.25$ and $\gamma=0.5$.
Fig. \ref{fig:snr} displays our simulation results for Lorentzian generators, which yield a stable correction filter. Notably, the final application of the correction filter does not significantly alter the mean squared error (MSE). We will delve into more details in Section \ref{sec:sim}.

However, some generators demonstrate instability. This is particularly evident with scaled first-order splines, as discussed in section \ref{sec:sim}. Fig. \ref{fig:R} illustrates the instability of the correction filter, attributed to roots at the central frequency in $H(\omega)$. This root presence amplifies the recovery error in the final step, as illustrated in Fig. \ref{fig:snr_2spline}, where the correction filter exacerbates the error by approximately 30 dB across the varying noise levels. This highlights the necessity for a more robust approach to address this instability, which we introduce in the subsequent section.

\subsection{Improving Robustness}

In this section, we address the enhancement of robustness in the presence of noise. During the recovery process discussed previously, we measure $\mathcal{M}_\lambda y[n T_s]$. When noise is present, we assume the measurement is $\mathcal{M}_\lambda y[n T_s] + e[n]$, where $e[n]$ represents the noise. We begin with a BL modulo recovery technique as detailed in \cite{azar2022robust, Azar_Mulleti_Eldar_2022a}, resulting in $\mathcal{M}_\lambda y[n T_s] + e[n] \to \hat{y}[n T_s]$. Naturally, there is an error $e_{\text{BL}}[n] = \hat{y}[n T_s] - y[n T_s]$. Note that the modulo recovery algorithm is nonlinear, and a comprehensive analysis of its robustness is detailed in the cited works.

The subsequent phase, absent in the BL signal modulo recovery, involves division in the frequency domain by $H(\omega)$ for $\omega \in [-\frac{\pi}{T}, \frac{\pi}{T}]$. Let $\hat{a}[n]$ be the estimated coefficients. From the recovery process, we know from Theorem \ref{thm:recovery_icassp} that $A(e^{j\omega}) = \frac{Y(e^{j\omega})}{H(\omega)}$ and from the proposed recovery process $\hat{A}(e^{j\omega}) = \frac{\hat{Y}(e^{j\omega})}{H(\omega)}$ for $\omega \in [-\frac{\pi}{T}, \frac{\pi}{T}]$. The final reconstruction error is $e_{\text{coef}}[n] = \hat{a}[n] - a[n]$. We can analyze the error's DTFT as follows:

\begin{equation}
	\begin{split}
		E_{\text{coef}}(e^{j\omega}) = \hat{A}(e^{j\omega}) - A(e^{j\omega}) = \frac{\hat{Y}(e^{j\omega})}{H(\omega)} - \frac{Y(e^{j\omega})}{H(\omega)} =\\= \frac{1}{H(\omega)} E_{\text{BL}}(e^{j\omega})
	\end{split}
\end{equation}
for $\omega \in [-\frac{\pi}{T}, \frac{\pi}{T}]$. According to Parseval's theorem, the MSE in discrete time is equivalent to the MSE in the frequency domain, so:

\begin{equation}
	\label{eq:e_mse}
	\| e_{\text{coef}} \|_2^2 = \Big\| \frac{1}{H(\omega)} E_{\text{BL}}(e^{j\omega}) \mathbf{1}_{[-\frac{\pi}{T}, \frac{\pi}{T}]} \Big\|_2^2 .
\end{equation}

Many common generators distribute their energy unevenly across the frequency interval $[-\frac{\pi}{T}, \frac{\pi}{T}]$, leading to high MSE in equation (\ref{eq:e_mse}). The main concern arises when $H(\omega)$ has roots within $[-\frac{\pi}{T}, \frac{\pi}{T}]$, causing the error from the BL modulo recovery $e_{\text{BL}}$ to amplify the coefficients error $e_{\text{coef}}$, as shown in equation (\ref{eq:e_mse}).

To enhance the method's stability for such generators, we incorporate a mixer using a $T$-periodic function \(p(t)\) prior to the LPF. This transforms the input signal into a BL signal with a Nyquist rate of \(T\) in the analog domain, resulting in \(y(t) = \text{LPF}\Big(p(t) x(t)\Big)\). The subsequent steps mirror the recovery process described in Theorem \ref{thm:recovery_icassp}, including the application of an analog modulo operator, sampling above the Nyquist rate, and using a proposed modulo recovery algorithm \cite{azar2022robust, Azar_Mulleti_Eldar_2022a, Shah_Mulleti_Eldar_2023}. The final step applies a correction filter different from $\frac{1}{H(\omega)}$ that will be specified. The mixer enables the use of a broader frequency range of $H(\omega)$ to create a favorable correction filter that minimizes noise amplification as seen in equation (\ref{eq:e_mse}). This avoids singularities and results in a more constant filter across frequencies. The mixer selection is tailored to the specific needs of the SI space, sampling rate, and noise levels.

The added mixer operates on the analog signal, allowing for the use of a wider frequency range of $H(\omega)$ without necessitating a higher sampling rate. The only difference in the recovery process is the use of a different digital correction filter, which is known derives from system's setting.
Figure \ref{fig:R} provides a visual representation, showing the correction filter for a scaled B-spline with two singularities, indicating unstable reconstruction. It also illustrates the correction filter with an appropriately selected mixer, which avoids the singularities and noise amplification as seen in equation (\ref{eq:e_mse}), leading to improved reconstruction.

\begin{figure*}[htb]
		\centering
		\centerline{\includegraphics[width=14cm]{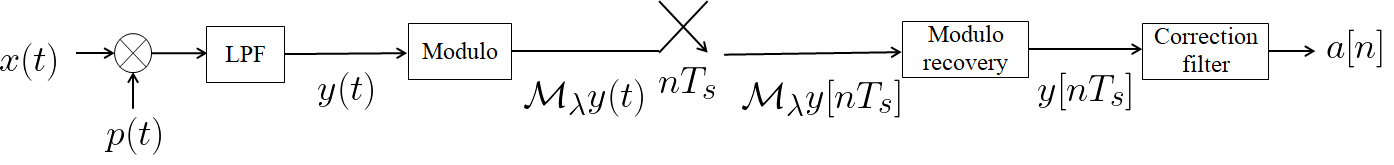}}
	\caption{Block diagram for modulo sampling for SI spaces with improved stability. See theorem \ref{thm:recovery} for more details on the system and proof of recovery.}
	\label{fig:sys}
\end{figure*}

Let \(\mathcal{H}\) represent a SI space as defined in (\ref{eq:si}). Consider a $T$-periodic signal $p(t)$. It can be expressed using its Fourier series representation as $p(t) = \sum_{l\in\mathbb{Z}}{c_l e^{-j\frac{2\pi}{T} l t}}$, where the coefficients $c_l$ are given by $c_l = \frac{1}{T} \int_{-\frac{T}{2}}^{\frac{T}{2}}{p(t) e^{j\frac{2\pi}{T} l t}} dt$. Denote

\begin{equation}\label{eq:R}
	R(\omega) =  \sum_{l\in\mathbb{Z}}{c_l  H\Big(\omega - \frac{2\pi l}{T}\Big)}.
\end{equation}

\begin{theorem}
	\label{thm:recovery}
	Let $\mathcal{H}$ be a SI space as in (\ref{eq:si}) and the input signal $x(t)\in\mathcal{H}$.
	A mixer with $T$-periodic signal $p(t)$ and LPF with cut-off frequency $\frac{\pi}{T}$ are applied, resulting in $y(t)= \text{LPF}\Big( p(t)x(t) \Big)$. The modulo operator is applied, giving $\mathcal{M}_\lambda y(t)$. The sampling rate is $T_s < T$, resulting in the samples $\mathcal{M}_\lambda y[n T_s]$.
	
	The input signal $x(t)$ can be perfectly recovered from the samples $\mathcal{M}_\lambda y[n T_s]$, assuming that $R(\omega)$, as defined in (\ref{eq:R}), is non-zero almost everywhere for $\omega\in[-\frac{\pi}{T}, \frac{\pi}{T}]$.

\end{theorem}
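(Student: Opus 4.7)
The plan is to follow the same two-stage structure as the proof of Theorem~\ref{thm:recovery_icassp}, with the single change that the role played there by $H(\omega)$ is now played by the composite spectrum $R(\omega)$ produced by the mixer–LPF pair. First, since the LPF has cutoff $\pi/T$, the signal $y(t)$ is bandlimited with Nyquist rate $1/T$ and is modulo-sampled at rate $T_s<T$. By the exact hypotheses of the $B^2R^2$ recovery result in \cite{azar2022robust,Azar_Mulleti_Eldar_2022a}, this allows exact reconstruction of $y(t)$ from $\{\mathcal{M}_\lambda y[nT_s]\}$, equivalently of $Y(\omega)$ on $[-\pi/T,\pi/T]$. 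This reduces the theorem to an algebraic identity relating $Y(\omega)$ to the coefficient sequence $a[n]$.

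The one new computation is the spectrum of the mixed signal $p(t)x(t)$. Expanding $p$ in its Fourier series and taking the CTFT termwise, multiplication by $e^{-j(2\pi/T)lt}$ corresponds to a frequency shift, so that $\mathcal{F}\{p\cdot x\}(\omega)$ is a sum of shifted copies of $X$ weighted by the coefficients $c_l$. Substituting the SI factorization $X(\omega)=A(e^{j\omega T})H(\omega)$ used in the proof of Theorem~\ref{thm:recovery_icassp}, and invoking the $2\pi/T$-periodicity of $A(e^{j\omega T})$ (since $e^{j2\pi l}=1$ for every integer $l$), the DTFT factor pulls out of the sum. After reindexing to match the sign convention in (\ref{eq:R}), what remains is precisely $A(e^{j\omega T})\,R(\omega)$. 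Restricting to the passband of the LPF therefore gives
\begin{equation}
	Y(\omega)=A(e^{j\omega T})\,R(\omega),\qquad \omega\in\Bigl[-\tfrac{\pi}{T},\tfrac{\pi}{T}\Bigr].
\end{equation}

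The final step is immediate from the non-vanishing assumption: since $R(\omega)\neq 0$ almost everywhere on $[-\pi/T,\pi/T]$, pointwise division recovers $A(e^{j\omega T})=Y(\omega)/R(\omega)$ on one full period of the DTFT. Inverting yields $a[n]$, and then $x(t)=\sum_{n}a[n]h(t-nT)$ reconstructs the input via (\ref{eq:si_signal}). At the system level, this last stage is realized exactly as in Theorem~\ref{thm:recovery_icassp}, but with the discrete-time correction filter $1/H(\omega)$ replaced by $1/R(\omega)$.

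I expect no real obstacle: once the bandlimited reduction and the periodicity-based factorization of $A(e^{j\omega T})$ out of the shifted-copy sum are in hand, the argument is essentially a one-line upgrade of Theorem~\ref{thm:recovery_icassp}. The substantive content of the result lies not in the uniqueness argument but in the design flexibility that $p(t)$ provides: the mixer parameters can be tuned so that $R(\omega)$ avoids the zeros of $H(\omega)$ that drive the error blow-up visible in \eqref{eq:e_mse}, which is the stability story the surrounding discussion builds toward.
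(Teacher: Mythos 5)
Your proposal is correct and follows essentially the same route as the paper's proof: bandlimit via the LPF, recover $Y(\omega)$ with $B^2R^2$, expand $p(t)x(t)$ through the Fourier series of $p$ to obtain $Y(\omega)=A(e^{j\omega T})R(\omega)$ on $[-\frac{\pi}{T},\frac{\pi}{T}]$, and divide by $R(\omega)$. Your explicit appeal to the $\frac{2\pi}{T}$-periodicity of $A(e^{j\omega T})$ to pull the DTFT factor out of the shifted-copy sum is a small point the paper leaves implicit, but the argument is otherwise identical.
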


\begin{proof}
	% Let $x(t)$ as defined in (\ref{eq:si_signal}) represent the given signal.
	From the properties of CTFT and DTFT \cite{eldar2015sampling} we can tell that 
	\begin{equation}
			\mathcal{F}\Big\{ p(t)x(t) \Big\} = \sum_{l\in\mathbb{Z}}{c_l X\Big(\omega - \frac{2\pi l}{T}\Big)}
	\end{equation}
additionally from (\ref{eq:si_signal}) and \cite{eldar2015sampling}, we can see that $X(\omega) = A(e^{j\omega T}) H(\omega)$. We conclude that
	
	\begin{equation}
		\label{eq:px_ft}
		\begin{split}
		\mathcal{F}\Big\{ p(t)x(t) \Big\} = \sum_{l\in\mathbb{Z}}{c_l A(e^{j\omega T}) H\Big(\omega - \frac{2\pi l}{T}\Big)} =\\ A(e^{j\omega T}) \Bigg( \sum_{l\in\mathbb{Z}}{c_l H\Big(\omega - \frac{2\pi l}{T}\Big)} \Bigg) = A(e^{j\omega T}) R(\omega).
		\end{split}
	\end{equation}
	After applying LPF on $p(t) x(t)$, we get a BL signal $y(t)$. From the definition of LPF we know that $\mathcal{F}\Big\{ p(t)x(t) \Big\} (\omega) = Y(\omega)$ for $\omega\in[-\frac{\pi}{T}, \frac{\pi}{T}]$. Thus, with the use of (\ref{eq:px_ft})
	
	\begin{equation}\label{eq:y_ft}
		Y(\omega) = A(e^{j\omega T}) R(\omega), \quad\omega\in\Big[-\frac{\pi}{T}, \frac{\pi}{T}\Big].
	\end{equation}
Note that $y(t)$ is BL with Nyquist rate of $T$, modulo operator is applied and than sampled at rate $T_s < T$, resulting in the samples $\mathcal{M}_\lambda y[n T_s]$. Following \cite{azar2022robust, Azar_Mulleti_Eldar_2022a}, we can recover the signal $y(y)$, meaning we get $Y(\omega)$ for $\omega\in[-\frac{\pi}{T}, \frac{\pi}{T}]$.
	
	Using equation (\ref{eq:y_ft}), we can compute $A(e^{j\omega T}) = \frac{Y(\omega)}{R(\omega)}$. Note that we required in the theorem that $R(\omega)\ne 0$ almost everywhere for $\omega$ in the interval $[-\frac{\pi}{T}, \frac{\pi}{T}]$.
	We apply a discrete-time filter with a frequency response of $\frac{1}{R(\omega)}$ over the interval $\omega \in [-\frac{\pi}{T}, \frac{\pi}{T}]$.
	This operator serves as the correction filter. Thereby obtaining $a[n]$ and consequently recovering the input signal $x(t)$ using (\ref{eq:si_signal}).

\end{proof}

Note that if we select $p(t) = 1$ than $R(\omega)= H(\omega)$. From the assumption in the theorem, we can see that the correction filter $\frac{1}{R(\omega)} = \frac{1}{H(\omega)}$ in properly defined.
Fig. \ref{fig:pipeline} illustrates a noise-free recovery scenario, showcasing the complete sequence of system operations. The input signal resides within the SI space formed by a Lorentzian function, as defined in (\ref{eq:lorz}), having a scale parameter $\gamma=0.5$, and the space's shift is defined as $T=1$. The SI space is $\mathcal{H}=\Big\{ \sum_{n\in\mathbb{Z}}{a[n]h_{0.5}(t-n)} | a\in\ell^2 \Big\}$. Note that here we do not use a mixer, we set $\lambda=0.1$ and the oversampling rate is set to $5$.
In Fig. \ref{fig:pipeline}(a), the progression of the signal, its state after undergoing LPF, and the outcome after applying the modulo operation is depicted. 
Note that the original signal is not BL. The application of a LPF, however, enables subsequent modulo recovery.
Fig. \ref{fig:pipeline}(b) displays both the true parameters $a[n]$ and the parameters recovered through the process.

\begin{figure*}[htb]
	
	\centering
	\centerline{\includegraphics[width=\textwidth]{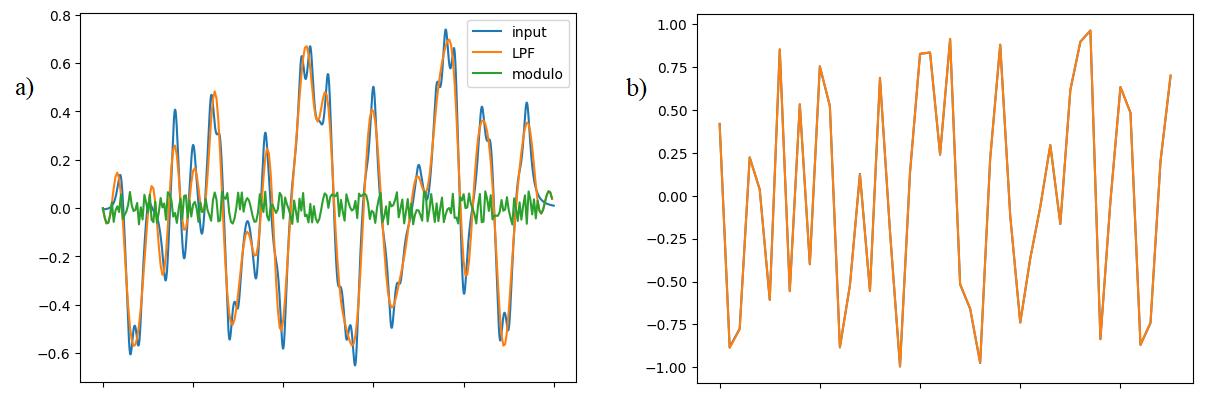}}

	\caption{Signal recovery for a Lorentzian-based signal within the SI space. (a) the initial input signal (blue) $x(t)= \sum_{n\in\mathbb{Z}}{a[n] h_{0.5}(t-n)} \in \mathcal{H}$, followed by LPF, $y(t)=\text{LPF}(x(t))$ (orange), and modulo operation $\mathcal{M}_\lambda(y(t))$ (green). Subfigure (b) the parameter recovery phase, meaning $a[n]$ and the reconstruction $\hat{a}[n]$.}
	\label{fig:pipeline}
\end{figure*}

\section{Simulations and Noise Resilience}
\label{sec:sim}

This section illustrates the modulo sampling process within SI spaces, particularly focusing on the effects of noise. A key issue is the influence of the LPF on system energy loss.
We utilize a Lorentzian function generator $h_\gamma(t)$, as defined in (\ref{eq:lorz}) with scales $\gamma=0.5$ and $\gamma=0.25$, the Fourier transform is shown in (\ref{eq:lorz_ft}). The SI space, as in (\ref{eq:si_signal}) will use $T=1$. The study compares two scenarios: one without a mixer and another with a non-trivial mixer, which exploits the generator's higher frequencies to stabilize the correction filter.

Our simulations involved 500 distinct signals, each generated with 50 coefficients $a[n]$ drawn from a uniform distribution over $[-1, 1]$. For the scenario incorporating a mixer, the signals $x(t)$ were combined with $p(t)$, followed by LPF application with a cutoff frequency $\frac{\pi}{T}$, resulting in $y(t)=\text{LPF}(p(t) x(t))$. The mixer design was tailored to the SI space, while in the non-mixer scenario (effectively a constant mixer), LPF was directly applied. Post-normalization ensured the maximum absolute amplitude of the samples was one, before applying the modulo operator with $\lambda=0.2$ to obtain $\mathcal{M}_\lambda y(t)$. Sampling was then conducted at an oversampling rate of 5, denoted as \(T_s = \frac{T}{5}\), considering the Nyquist rate of the post-LPF signal $y(t)$ is $T$. This setup provided samples \(\mathcal{M}_\lambda y[n T_s]\). We introduced noise at various Signal-to-Noise Ratio (SNR) levels and utilized the $B^2 R^2$ algorithm \cite{azar2022robust, Azar_Mulleti_Eldar_2022a} for the modulo recovery of the BL signal $y[n T_s]$. To deduce the coefficients \(a[n]\), the correction filter from theorem \ref{thm:recovery} was applied, converting \(Y(\omega) \to\frac{Y(\omega)}{R(\omega)} = A(e^{j\omega T})\) for \(\omega \in [-\frac{\pi}{T},\frac{\pi}{T}]\), with $R(\omega)$ adjusted based on the mixer selection.

The approximated coefficients were denoted as \(\hat{a}[n]\). We evaluated MSE metrics, calculating \(\frac{\|a - \hat{a}\|_2^2}{\|a\|_2^2}\) for coefficient recovery and \(\frac{\|y[n T_s] - \hat{y}[n T_s]\|_2^2}{\|y[n T_s]\|_2^2}\) for modulo recovery, with \(\hat{y}[n T_s]\) indicating the reconstructed modulo samples. This method facilitated an independent assessment of modulo recovery stability.

Fig. \ref{fig:snr} graphically presents the results at various SNR levels, setting $p(t) = 1 + 200 \cos(\frac{2\pi}{T}t) + 200 \cos(\frac{4\pi}{T}t)$.
Analysis shows similar levels of MSE in both the modulo recovery process and the final recovery of coefficients, regardless of the mixer's use. The consistency in results is remarkable, considering the different approaches taken in generating the signals and the use of distinct correction filters in each instance, where the filters are expected to be stable. At lower SNR levels, the results are closely matched, with the mixer scenario showing a marginal improvement. However, at higher SNR levels, the performance without a mixer is notably better.

The LPF application incurs energy loss, impacting the system's performance while not affecting the modulo recovery step directly. For $\gamma=0.5$, the energy loss was about 2.2\%, in contrast to $\gamma=0.25$, which experienced a higher energy loss of 11\%. Fig. \ref{fig:generators} displays the CTFT of the generators and their lower energy bounds.

\begin{figure}[htb]
	\begin{minipage}[b]{\linewidth}
		\centering
		\includegraphics[width=7cm]{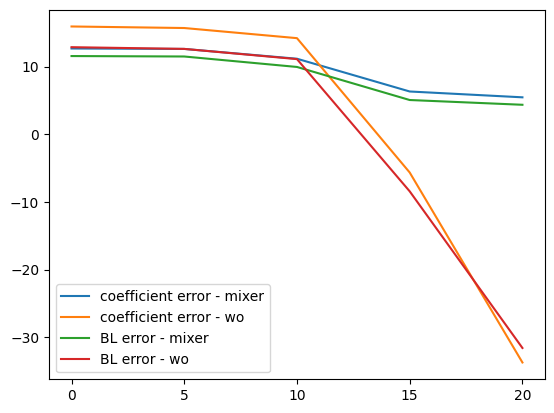}
		\includegraphics[width=7cm]{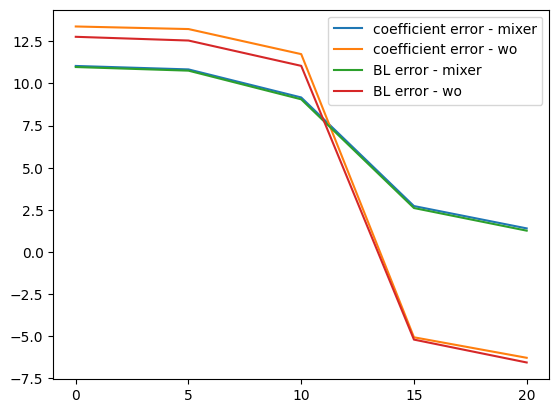}
	\end{minipage}
	\caption{MSE in dB for varying SNR values using a Lorentzian function generator with scales $\gamma=0.5$ (top) and $\gamma=0.25$ (bottom). ``BL error" refers to the recovery discrepancy of \(\hat{y}[n T_s]\), and ``coefficient error" relates to the precision in reconstructing \(\hat{a}[n]\), showcased for scenarios with and without mixer usage.}
	\label{fig:snr}
\end{figure}

Additional simulations on an alternative generator aimed to analyze scenarios with an unstable correction filter. Denote the first order B-spline as

\begin{equation}
	\beta^1(t) =
	\begin{cases}
		1 - |t|, \quad -1\le |t|\le 1\\
		0,\quad \text{o.w}
	\end{cases}.
\end{equation}

Employing a scaled first-order B-spline by $2.5$ for a generator $h(t)=\beta^1(\frac{t}{2.5})$ created a scenario where $H(\omega) = \text{sinc}^2 (2.5 \omega)$, with singularities within $[-\frac{\pi}{T},\frac{\pi}{T}]$. $\lambda$ remains he same but the mixer is set to $p(t) = 1 + 1000 \cos(\frac{2\pi}{T}t) + 1000 \cos(\frac{4\pi}{T}t)$ with Fig. \ref{fig:snr_2spline} showcasing the outcomes.
Using this particular generator, a significant decline in MSE is observed transitioning from the modulo recovery stage to the final coefficient error, a consequence of the correction filter's instability. The introduction of a mixer addresses this problem by producing a considerably more stable correction filter, thereby enhancing overall performance. Notably, while the BL recovery shows improved results in the absence of a mixer at higher SNR levels, the instability of the filter leads to a substantial worsening of the final MSE compared to that achieved with the mixer's application.

\begin{figure}[htb]
	\begin{minipage}[b]{\linewidth}
		\centering
		\includegraphics[width=7cm]{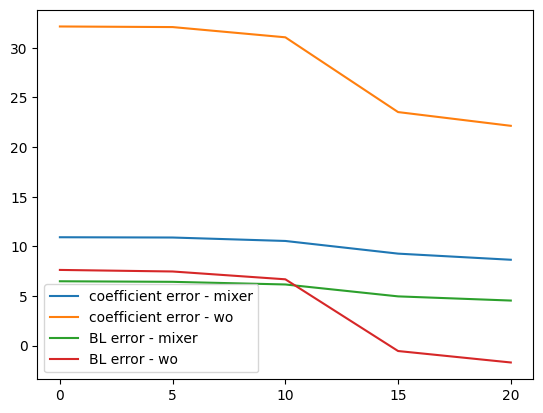}
	\end{minipage}
	\caption{MSE in decibels across SNR values for a scaled first-order B-spline generator. Error metrics follow the same definition as in Fig. \ref{fig:snr}, with results indicating the mixer's contribution to improved recovery outcomes at lower SNR levels.}
	\label{fig:snr_2spline}
\end{figure}

Fig. \ref{fig:R} displays the $R(\omega)$ function for both scenarios in the simulations, following $\ell_2$ normalization. In the no-mixer situation, the presence of roots is evident, whereas they are absent in the mixer case. Fig. \ref{fig:R} illustrates the correction filter, represented as $\frac{1}{R(\omega)}$. In the no-mixer scenario, significant peaks are observable, indicating that noise at those frequencies will be significantly amplified. Conversely, the correction filter in the mixer-utilized scenario exhibits much greater stability.

\begin{figure}[htb]
	\begin{minipage}[b]{\linewidth}
		\centering
		\includegraphics[width=\columnwidth]{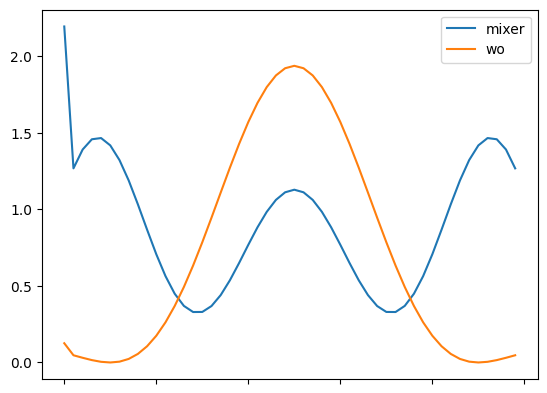}
		\includegraphics[width=\columnwidth]{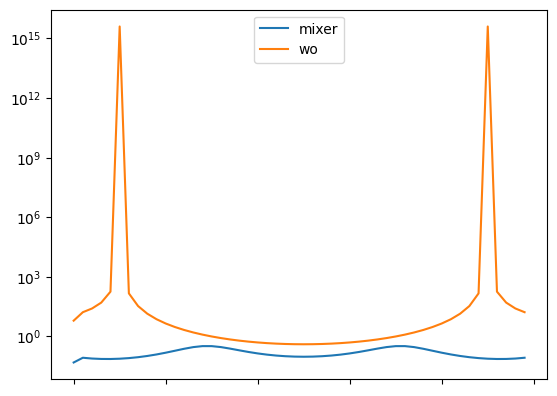}
	\end{minipage}
	\caption{On the top, the $R(\omega)$ function as determined in the simulations for both the mixer and non-mixer scenarios, post-$\ell_2$ normalization. On the bottom, the associated correction filter, $\frac{1}{R(\omega)}$, note the peaks observed in the non-mixer case.}
	\label{fig:R}
\end{figure}

The primary concern in modulo recovery $\mathcal{M}_\lambda f(x) [n T_s] \to f(x) [n T_s]$ varies with mixer use, as seen in the scaled-spline scenario. The correction filter's step $Y(\omega) \to \frac{Y(\omega)}{R(\omega)}$ emphasizes the importance of $h(t)$'s energy distribution across frequencies, as regions with diminished energy magnify noise through the filter.

\begin{figure}[htb]
	\begin{minipage}[b]{\linewidth}
		\centering
		\centerline{\includegraphics[width=8cm]{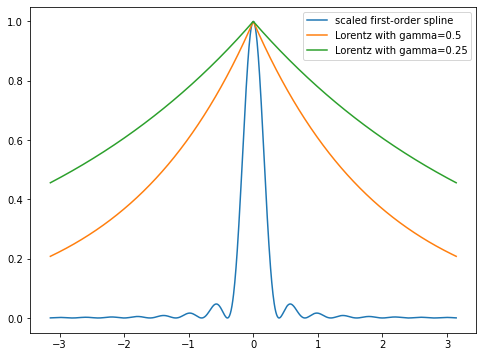}}
	\end{minipage}
	\caption{Fourier transforms of three generators within the interval $[-\frac{\pi}{T}, \frac{\pi}{T}]$. Scaled first-order spline (blue), Lorentzian with $\gamma=0.5$ (orange), and with $\gamma=0.25$ (green).}
	\label{fig:generators}
\end{figure}

The emphasis is on the impact of the LPF and mixer on the system's performance and the detailed examination of energy loss's role in the overall efficiency of the signal recovery process.

%%%%%%%%%%%%%%%%%%%%%%%%%%%%%%%

\section{Exploring Modulo Sampling in ECG Signal Analysis}
\label{sec:ecg}

ECG signals are vital for monitoring the heart's electrical activity, offering insights into cardiac health and identifying various heart conditions. These signals capture the sequential heartbeats, each comprising distinct components like the P wave, QRS complex, and T wave, which reflect different phases of the heart cycle \cite{Ajdaraga_Gusev_2017, Kwon_Jeong_Kim_Kwon_Park_Kim_Choi_2018}.

ECG signals exhibit a wide dynamic range due to the amplitude variations in heartbeat waveforms, stemming from physiological differences and the heart's activity level. The broad dynamic range poses challenges in signal acquisition and processing, where capturing the full scale of signal variations without distortion or loss of detail is critical. Modulo sampling presents a promising approach to address these challenges by enabling the effective compression of ECG signals' dynamic range, facilitating high-fidelity capture and reconstruction of the original waveform without the need for high-resolution ADCs.

For this study, we utilized data from \cite{Schellenberger2020}, a public dataset comprising ECG recordings from 30 healthy adults under various measurement conditions. This dataset, primarily focused on exploring radar-based vital sign detection, includes high-quality ECG signals sampled at 1000 Hz. Given that the spectrum of ECG signals is mostly contained up to 50 Hz, they can effectively be sampled at lower rates, such as 250 Hz or 500 Hz, without significant loss of information. However, the availability of ECG data sampled at 1000 Hz is advantageous, providing a high sampling rate that ensures comprehensive capture of the signal's details, making it particularly suitable for detailed analysis and processing tasks.

From this dataset, we extracted a single heartbeat from the first ECG channel of the first participant to serve as our base pulse. This pulse represents a typical cardiac cycle, embodying the characteristic waveform components essential for accurate ECG analysis.

Mathematically, the SI space will be modeled using the base pulse represented by $h(t)$, with a defined time shift $T>0$. This involves summing instances of the base pulse, each shifted by integer multiples of $T$. We will specify $T = 0.05$ seconds, corresponding to a frequency $f = \frac{1}{T} = 20$ Hz. The coefficients will comprise a sparse array where selected indices are set to one—representing the occurrences of heartbeats—and the remaining values are zero, thus indicating the precise moments of heartbeats.

Applying the modulo operator to this ECG signal model allows us to compress its dynamic range, facilitating the recording and analysis of ECG signals under limited resolution conditions. Subsequently, we employ a recovery algorithm designed to reconstruct the original ECG waveform from its modulo-sampled version. Through this process, we aim to accurately recover precise timing of each heartbeat, from the modulo samples. This methodology underscores the potential of modulo sampling in enhancing ECG signal analysis, promising improvements in both diagnostic tools and wearable health monitoring devices.

We apply a LPF to the SI signal, setting the cutoff frequency at $\frac{5\pi}{T}$. Following this, we implement the modulo operator, selecting $\lambda$ to be $0.1$ times the maximum amplitude of the signal. The signal is then sampled at an oversampling rate of 5. For the recovery of the modulo operation, we employ the methodology developed by Bahandari et al. \cite{bhandari2020unlimited}. The final step involves applying a correction filter, which is formulated based on the base pulse and the time shift $T$. Refer to Fig. \ref{fig:ecg} for a visual representation of the process.

\begin{figure*}[htb]
	\begin{minipage}[b]{\linewidth}
		\centering
		\includegraphics[width=\columnwidth]{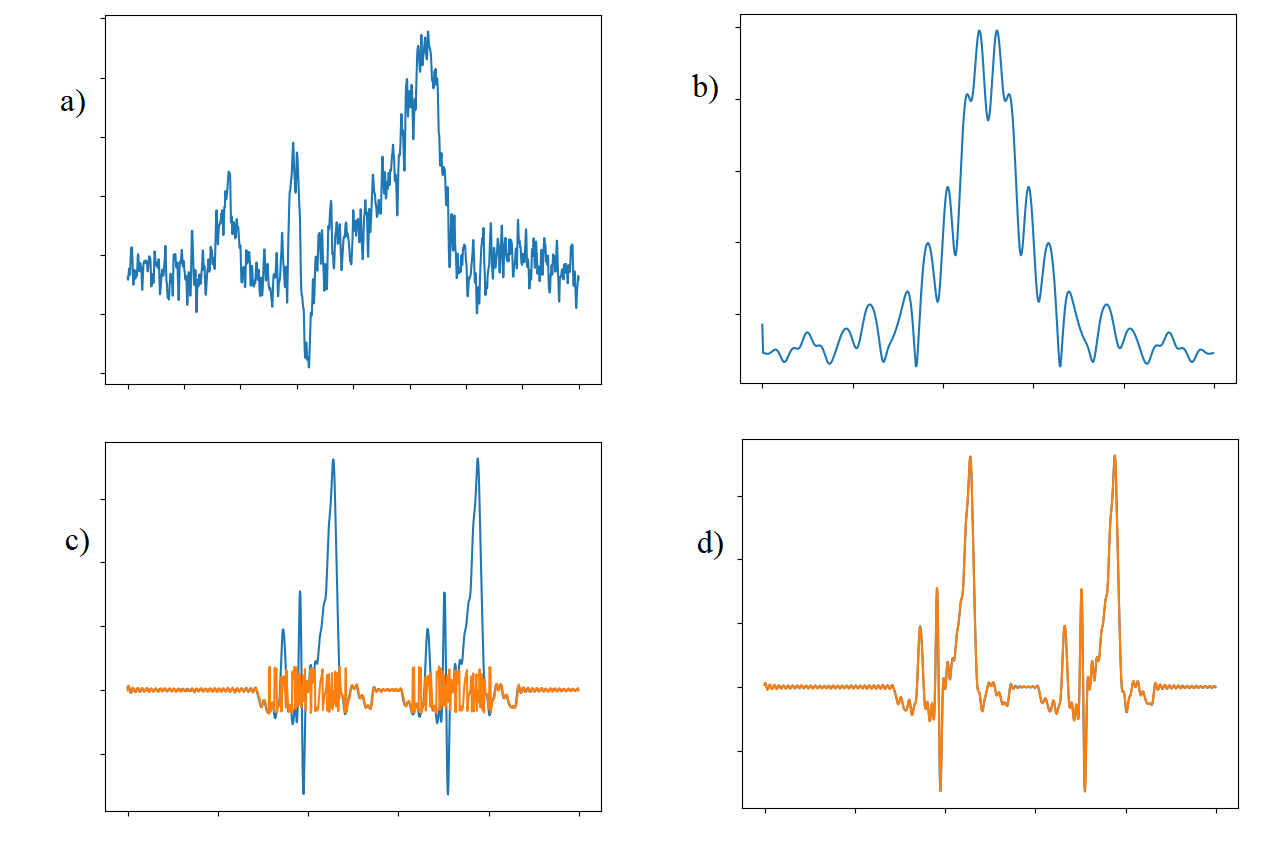}
	\end{minipage}
	\caption{The figure demonstrates the application of modulo sampling to ECG signals in a shift-invariant context. (a) The base pulse used, shown at the original sampling rate of $1000$ Hz. (b) The function $|H(\omega)|$, given these parameters, demonstrates that the correction filter is equivalent to $\frac{1}{H(\omega)}$. It is important to note that $|H(\omega)|$ is bounded away from zero, ensuring some stability in the correction process. (c) A segment of the input signal (in blue) alongside its modulo-folded version (in orange), illustrating the folding process. (d) Compares the original signal (in blue) with their reconstruction (in orange), underscoring the precise recovery of coefficients and thereby the accurate reconstruction of the ECG signal.}
	\label{fig:ecg}
\end{figure*}

\section{Conclusion}
\label{sec:conclusion}

In this study, we introduced an innovative approach for the recovery of SI signals through modulo sampling, highlighting the integration of a mixer and a LPF to produce a BL signal. This BL signal is then accurately reconstructed from its modulo samples if sampled above the Nyquist rate, building upon findings from prior research. The original SI signal is effectively retrieved from this BL counterpart. Our methodology enables recovery at any sampling rate below \(T\), which is the theoretical minimum rate for SI signals. Through extensive simulations, we evaluated the system's robustness against varying noise levels, showcasing the significant advantage of incorporating a mixer. This addition proves particularly beneficial for enhancing stability in cases involving unstable generators, thereby expanding the practical applications and effectiveness of modulo sampling in SI signal recovery.

\bibliographystyle{IEEEtran}
\bibliography{refs}

\end{document}